\documentclass[10pt,twocolumn,twoside]{IEEEtran}

\usepackage{amsfonts}

\usepackage{epsfig}

\newtheorem{proposition}{\textit{Proposition}}
\newtheorem{algorithm}{\textit{Algorithm}}
\begin{document}
\title{An Efficient Greedy Algorithm for Sparse Recovery in Noisy Environment}
\author{Hao~Zhang, Gang~Li, Huadong~Meng
\thanks{Submitted August 13, 2009; }
\thanks{Hao Zhang, Gang Li and Huadong Meng are with the Department of Electronics Engineering, Tsinghua University.}}
%

\maketitle

\begin{abstract}
Greedy algorithm are in widespread use for sparse recovery because
of its efficiency. But some evident flaws exists in most popular
greedy algorithms, such as CoSaMP, which includes unreasonable
demands on prior knowledge of target signal and excessive
sensitivity to random noise. A new greedy algorithm called AMOP is
proposed in this paper to overcome these obstacles. Unlike CoSaMP,
AMOP can extract necessary information of target signal from sample
data adaptively and operate normally with little prior knowledge.
The recovery error of AMOP is well controlled when random noise is
presented and fades away along with increase of SNR. Moreover, AMOP
has good robustness on detailed setting of target signal and less
dependence on structure of measurement matrix. The validity of AMOP
is verified by theoretical derivation. Extensive simulation
experiment is performed to illustrate the advantages of AMOP over
CoSaMP in many respects. AMOP is a good candidate of practical
greedy algorithm in various applications of Compressed Sensing.
\end{abstract}

\begin{keywords}
Compressed Sensing, Greedy algorithm, Sparse Recovery, Noisy
Environment.
\end{keywords}

\section{Introduction}
\PARstart{S}parse signal recovery problem is the reconstruction of
such signals with characteristic of "Sparsity" from a set of
nonadaptive linear measurements. It has great potential of
application on various engineering fields such as coding and
information theory, signal processing, machine learning and others
(See papers in website \cite{zhang1} and reference herein). Sparse
signals contains much less information than their ambient dimension
suggests. Most of entries in its vector representation are zero (or
negligible). So it is possible to reconstruct original signal
approximately or even accurately using only a small number of linear
measurements. The measurements are of the form $\Phi{x}$ where
$\Phi$ is some $m{\times}N$ measurement matrix, $m\ll{N}$ and $x$ is
original signal. Clearly the process of sparse signal recovery could
be formulated as solving undetermined linear algebraic equation
$y=\Phi{x}$, where $y$ is measurement data. It is well-known that
undetermined equation has infinite solutions in general. But if we
focus on "Sparse" solution only, situation will be different.
Although the operation for finding the most sparse solution of
undetermined linear equation is NP-Hard commonly\cite{zhang2},
theoretical work in compressed sensing has shown that for certain
kinds of measurement matrices, it is possible when the number of
measurements m is nearly linear in the sparsity of original signal
\cite{zhang3}\cite{zhang4}.

The two major algorithmic approaches to sparse signal recovery are
based on $L_1$-minimization and on greedy methods (Matching
Pursuit). Finding the most sparse solution of undetermined linear
equation is a $L_0$ optimization problem:
\begin{equation}
\min\|x\|_0,{\quad}s.t{\quad}\|{\Phi}x-y\|_2\leq\delta,\label{label1}
\end{equation}
It could be solved by $L_1$ relaxation for some measurement matrices
$\Phi$\cite{zhang5}. That is, solving (\ref{label1}) is equivalent
to solving the following $L_1$ optimization problem
\begin{equation}
\min\|x\|_1,{\quad}s.t{\quad}\|{\Phi}x-y\|_2\leq\delta,\label{label2}
\end{equation}
where $\|x\|_1=|x_1|+\cdots+|x_n|$ for $x\in{C}^n$. Recently, more
stronger sufficient condition called Restricted Isometry Property
(RIP) on measurement matrix $\Phi$ to guarantee the equivalence of
(\ref{label1}) and (\ref{label2}) was also proposed \cite{zhang6}.
It was widely accepted that $L_1$-minimization (\ref{label2}) was
normal path to complete sparse signal recovery. (\ref{label2}) is
essentially a linear programming problem and technique of convex
optimization could be utilized to solve it effectively
\cite{zhang7}. The $L_1$-minimization method provides uniform
guarantees for sparse recovery. Once the measurement matrix $\Phi$
satisfies the restricted isometry condition, this method works
correctly for all sparse signals $x$. The method is also stable, so
it works for non-sparse signals such as those which are
compressible, as well as noisy signals. However, the method is based
on linear programming, and there is no strongly polynomial time
algorithm in linear programming \cite{zhang8}. But its efficiency
was questionable and most popular software package of convex
programming, such as cvx\cite{zhang9}, is hard to be used in
practical application for its low rate of convergence, especially
when dimension of target signal is large.

On the other hand, Greedy algorithms are quite fast, both
theoretically and experimentally. It runs by iterating in general.
Typically, On each iteration, the inner products of residue vector
$r$ with the columns of measurement matrix $\Phi$ is computed  and a
least squares problem is solved to obtain the estimation of original
signal on this iteration. It is hoped that the convergence of
iterating could be ensured and the estimator could tend to original
signal in fewer steps \cite{zhang10}.

The typical case of greedy algorithms for sparse recovery includes
Orthogonal Matching Pursuit (OMP) \cite{zhang11}, Regularized
Orthogonal Matching Pursuit (ROMP) \cite{zhang8} and compressive
sampling matching pursuit (CoSaMP) \cite{zhang12}. It was shown that
OMP recovered the sparse signal with high probability and had great
speed, but it would fail for some sparse signals and matrices
\cite{zhang13}. The development of ROMP provides a greedy algorithm
with uniform guarantees for sparse recovery, just as that provided
by $L_1$-minimization method. Furthermore, CoSaMP improves upon
these results and provides rigorous runtime guarantees. However,
there are one disadvantage for these two algorithms. Firstly,
sparsity level must be presented as prior parameter for algorithms.
But it is unknown in most practical scenario and must be guessed in
advance. Once the estimated sparsity level has large difference with
actual one, the error of algorithm will increase evidently (Although
this error could be analyzed theoretically \cite{zhang12}). The
problem becomes more severe in noisy environment. ROMP and CoSaMP
can't adapt their running process to noise condition when noise is
presented. Actually, noise is inevitable in engineering problem and
target signal of our recovery algorithms is always buried in it.
There exist few "Pure sparse" signal in real world. Because the
dimension of target signal is unknown, it is always given with some
margin to avoid possible missing. Hence it is hard to extract target
signal without including certain amount of noise. This not only has
influence on accuracy of algorithm, but also reduce the speed of
convergence for algorithms. In fact, some calculation is carried
through to estimate noise, however, which is useless at all.

A new greedy algorithm for sparse recovery is presented in this
paper. Compared with ROMP and CoSaMP, our new algorithm need no any
prior information on sparsity level of target signal. Furthermore,
it is a kind of "adaptive" algorithm which can inspect the existence
of noise and adjust the halting condition automatically based on
detailed state of noise. Besides that, it has uniformly guarantee
and good efficiency, just as ROMP or CoSaMP. Hence our new algorithm
is a better choice when signal with unknown sparsity level is to be
extracted (such as compressible signal) under noisy background. This
paper is organized as follows: In Section 2 we introduce our new
algorithm. Section 3 describes some consequences of the restricted
isometry property that pervade our analysis. The convergence of
theorem is also established for sparse signals in Sections 3.
Practical consideration for algorithm implementation is provided in
Section 4. Empirical performance and some numerical experiment is
described in Section 5. Finally, Section 7 presents overall
conclusion.

\section{Description of new algorithm}\label{sec1}

\subsection{Motivation}

The most difficult and important part of signal reconstruction is to
identify the locations of the components in the target signal. The
common approach adopted by most greedy algorithms is "Local
Approximating", that is, computing the inner products between
measurement vector $y$ and columns of measurement matrix $\Phi$. We
will obtain observation vector $u=\Phi^Hy$ and use $u$ as "Local
Approximation" (or "Proxy") of target signal $x$. Note that $\Phi$
is a dictionary and $v$ is sparse, so $y$ has a sparse
representation with respect to the dictionary. It is reasonable that
only a few entries of $u$ are remarkable, which imply the locations
of the components of $x$, and most of its entries are comparatively
small. Of course, the precondition for argument above is that $\Phi$
must satisfy some condition such as RIP. Intuitively, given sparsity
level $n$ of $x$, every $n$ columns form approximately an
orthonormal system. Therefore, every $n$ coordinates of the
observation vector u look like correlations of the measurement
vector $y$ with the orthonormal basis and therefore are close in the
Euclidean norm to the corresponding coefficients of $x$.

Popular greedy algorithms, including OMP, ROMP and CoSaMP, pay much
attention to observation vector $u$ and build their estimator of
location of components in $x$ based on $u$. OMP uses one biggest
coordinate of $u$. It is argued that using only the biggest couldn't
provides uniformly guarantee. So ROMP makes use of the $n$ biggest
coordinates of $u$, rather than just biggest one, and take a further
step of regularization to improve the performance of algorithm. It
should be noted that sparsity level $n$ is always unknown. CoSaMP
employs more coordinates of $u$, the $2n$ biggest, to avoid the
possible leakage of component in $x$. But $n$ must be guessed to be
input in ROMP or CoSaMP as important parameter. If guessed $n$ is
smaller than its true value, correct result can't be found; On the
other hand, if guessed $n$ is set to a very large value (maybe much
larger than true value) to ensure that all of entries of $x$ will
enter the view of algorithms, certain amount of noise will presented
in our calculation inevitably. How can we make a good guess for
sparsity level $n$ without any knowledge on its true value?

\subsection{AMOP Algorithm}

Our new approach, named Adaptive Orthogonal Matching Pursuit (AMOP),
chooses appropriate number of biggest entries of observation vector
$u$ by studying the fine feature of $u$. At each step, $u$ is
computed by $u=\Phi^Hr$ where $r$ is the residue vector of last
step. Unlike other algorithms, AMOP determines the estimation for
$n$ by analyzing the trend of entries in $u$ arranged by descend
order of their amplitude. That is, relative amplitude difference of
adjacent elements in above queue is calculated and a threshold is
set. Once the relative amplitude difference between $k$th and
$(k+1)$th element in ordered queue of entries in $u$ exceed
threshold, $k$ will be chosen as estimation of $n$.

Detailed description of AMOP is proposed as follows:

\mbox{}
\begin{algorithm}[Adaptive Orthogonal Matching Pursuit]\label{label8}
\mbox{}
\begin{itemize}
\item[] \textbf{Input:} Measurement matrix $\Phi\in{R}^{M{\times}N}$,
Measurement vector $y\in{R}^N$, Threshold $T$, $\epsilon$ and $K$.

\item[1] Let $r=y$, $\Omega=\emptyset$.

\item[2] Calculate $u=\Phi^Hr$ and $|u|=(|u_1|,\cdots,|u_N|)$.

\item[3] Arrange $|u|$ by descend order to obtain
\begin{equation}
|u|_d=(|u|_{[1]},|u|_{[2]},\cdots,|u|_{[N]}),\label{label3}
\end{equation}

\item[4] Determine index $k$ as follows, let $\beta=1$,
\[
k=\min\left\{i\in\{1,\cdots,N\}\left|\frac{|u|_{[k+1]}-|u|_{[k]}}{|u|_{[k]}}>T*\beta\right.\right\},
\]

\item[5] if $k>K$ and $\beta<0.1$, set $k=K$; else $\beta=\beta*0.9$, goto step (4);

\item[6] Update the set of indices by $\Omega=\Omega\cup\{[1],\cdots,[k]\}$.

\item[7] Solve least square problem
\[
\min_{\hat{x}}\|\Phi|_{\Omega}\hat{x}-y\|_2,
\]
where $\Phi|_{\Omega}$ is a submatrix of $\Phi$ composed of its
columns with index in $\Omega$.

\item[8] Calculate the residue vector of $r=y-\Phi\hat{x}$.

\item[9] If $|r|/|y|<\epsilon$, output $\hat{x}$ and $\Omega$, stop; else go
to step (2).
\end{itemize}

\end{algorithm}
\mbox{}

As input, the AOMP algorithm requires two adjustable parameter $T$
and $\epsilon$ besides matrix $\Phi$ and measurement vector $y$. But
it doesn't need sparsity level of target signal $x$ anymore, unlike
ROMP and CoSaMP. It can be extracted incidentally along with running
of algorithm. Furthermore, the number of components selected in the
step (4) is determined by algorithm itself automatically. It is easy
to understand that this number is critical for performance of
algorithm. Any manual setting will introduce extra error when
mismatch between prescribed value and actual situation of data
exists. So it is very necessary to let greedy algorithm of sparse
recovery be adaptive, just as AOMP.

Step (5) should be noted that it give AMOP algorithm more
flexibility and stability. If threshold $T$ is set too large so that
too much coordinates was selected in one iteration, algorithm is
prone to degrade or crash. For this we build a upper bound in step
(5) to prevent the crazy growing of number of chosen components. If
this bound is exceed, threshold $T$ will be adjusted to smaller
value to increase the possibility of components in $|u|$ in step (3)
to satisfy the condition in step (4). The importance of step (5) is
also illustrated in following section on analysis of convergence.

\section{Theoretical Analysis of Algorithmic Performance}

There are two kinds of iterative invariant of greedy algorithm for
sparse recovery deduced in the convergence analysis for ROMP and
CoSaMP respectively. As to CoSaMP, assume the sparsity level $s$ is
preliminary and
\[
v=\|x-x_s\|_2+\frac{1}{\sqrt{s}}\|x-x_s\|_1+\|e\|_2,
\]
where $x_s$ is s-sparse approximation for $x$ and $e$ is additive
noise, the following assertion could be proved \cite{zhang12}.
\begin{equation}
\|x-\alpha^{k+1}\|_2\leq\|x-\alpha^k\|_2+10v,
\end{equation}
where $\alpha^k$ is result of pruning step in CoSaMP. Hence it is
forced to be s-sparse. So this kind of iterative invariant is not
suitable for analysis of AMOP because the sparsity level of
intermediate result at each step in AMOP isn't fixed. However, the
iterative invariant in ROMP simply concerns with the percentage of
the coordinates in the newly selected set belong to the support of
target signal $x$. It is argued that the ratio above isn't lower
than 50\% with the help of regularization step \cite{zhang8}. We
will follow the idea in \cite{zhang8} to derive our result on
convergence of AMOP.

\subsection{Localization of Energy}

By induction on the iteration of AOMP, we study the gain in each
iteration. Losing no generality, suppose sparsity level of target
signal $x$ be $S$,  and $k$ coordinates is selected eventually in
this iteration. Then its percentage of energy of first $k$
components in queue (\ref{label3}) is
\begin{equation}
P=\frac{\sum_{n=1}^k|y|_{[n]}^2}{\sum_{n=1}^k|y|_{[n]}^2+\sum_{n=k+1}^K|y|_{[n]}^2}
\end{equation}
For the descend order of queue (\ref{label3}), we have
\begin{eqnarray}
P&\geq&\frac{k|y|_{[k]}^2}{\sum_{n=1}^k|y|_{[n]}^2+(K-k)|y|_{[k]}^2}\nonumber\\
&\geq&\frac{k|y|_{[k]}^2}{\sum_{n=0}^{k-1}(1-T)^{-2n}|y|_{[k]}^2+(K-k)(1-T)^2|y|_{[k]}^2}\nonumber\\
&=&\frac{k}{\sum_{n=0}^{k-1}(1-T)^{-2n}+(K-k)(1-T)^2}\nonumber\\
&=&\frac{k}{\frac{1-(1-T)^{-2k}}{1-(1-T)^{-2}}+(K-k)(1-T)^2}\label{label4}
\end{eqnarray}
It is easily seen that \ref{label4} achieves its minimum at $k=1$,
that is
\begin{equation}\label{label5}
P_{min}=\frac{1}{1+(K-1)(1-T)^2}
\end{equation}
Here $k=1$ means only the largest coordinates was chosen, which is
just the choice of OMP. So OMP is a special case of AMOP. According
to \cite{zhang8} Lemma 3.6, a large portion of energy of
unidentified part of target signal would be locked by queue
(\ref{label3}) and certain amount of energy would be reserved by
"regularization step" in ROMP by \cite{zhang8} lemma 3.8. In AMOP,
the "regularization step" is replaced by choosing $k$ largest
coordinates, So more energy would be identified in AMOP than in OMP
because more than one coordinates would be chosen in AMOP generally.
The ability of locking uncovered energy of target signal for AMOP
and ROMP is compared in Fig.\ref{fig1}

\begin{figure}[h]
  \centering
  \centerline{\epsfig{figure=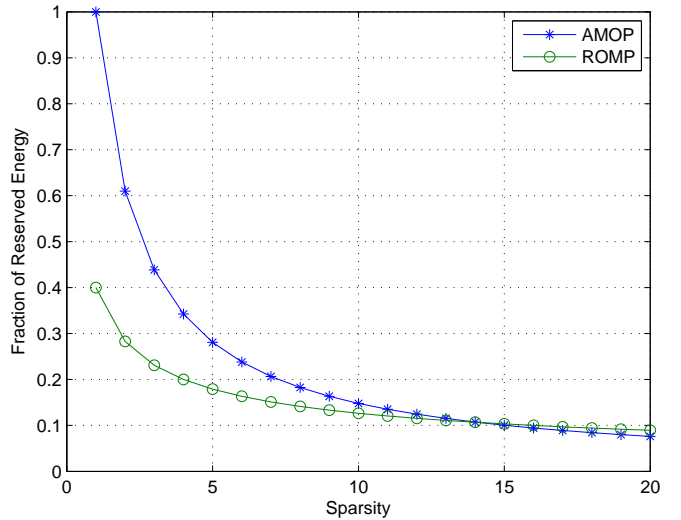,width=10cm}}
  \caption{Ability of Locking Uncovered Energy} \label{fig1}
\end{figure}

It is shown that ability of AMOP is superior to that of ROMP when
sparsity of target signal is small. Although the advantage becomes
vague when sparsity grows, AMOP still has relatively good
performance considering more coordinates would be chosen and
percentage of actual identified energy would be larger than
(\ref{label5}).

\subsection{Getting the "Correct" Support}

In \cite{zhang14} and \cite{zhang15}, the correctness of support of
solution for OGA algorithm under different noise scenario were
analyzed. Mutual coherence of matrix $\Phi$, overcomplete dictionary
system, was the key parameter for performance of greedy algorithm in
discussion therein. Here we will give analogous results for AMOP
under noisy condition based on RIC (Restricted Isometry
Constant)\cite{zhang3} of $\Phi$.

In practice the noise is unavoidable and it is always assumed that
ideal noiseless signal $y^0$ has sparse representation
$y^0={\Phi}x^0$, where the support of $x^0$ is very small. What we
can observe is noisy version $y=y^0+n$ where $\|n\|_2\leq\epsilon$.
Suppose $x^0$ be solution of
\begin{equation}
\min\|x\|_0,{\quad}s.t{\quad}{\Phi}x^0=y^0,\label{label6}
\end{equation}
$x^A$ be final result of AMOP, and
\begin{equation}
\mathcal{S}^0=\textrm{supp}(x^0),\quad\quad\mathcal{S}=\textrm{supp}(x^A)
\end{equation}
We argue that $\mathcal{S}{\subseteq}\mathcal{S}^0$ under certain
conditions on value distribution of target signal. That is, the
correctness of support of solution for AMOP can be guaranteed even
in noisy environment.

\begin{proposition}
if target signal $\{x_{\mathcal{S}}^0\}$ satisfy
\begin{equation}\label{label11}
\min_{j\in\mathcal{S}^0}\|x_j^0\|_2\geq2(\epsilon+\frac{\delta_K}{1-\delta_K}\sqrt{\frac{K}{2}}\max_{j\in\mathcal{S}^0}\|x_j^0\|_2),
\end{equation}
here matrix $\Phi$ is supposed to has $K$-order restricted isometry
constant $\delta_K$, $K$ is twice of sparsity level of target signal
$x^0$. Then $\mathcal{S}{\subseteq}\mathcal{S}^0$ holds throughout
the iteration process of AMOP unless all the coordinates in
$\mathcal{S}^0$ were chosen.
\end{proposition}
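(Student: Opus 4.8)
The plan is to prove, by induction on the iterations of AMOP, that the working index set never leaves the true support. Write $\phi_{j}$ for the $j$-th column of $\Phi$ and $\Phi_{I}$ for the submatrix of $\Phi$ with columns indexed by $I$, and normalize $\|\phi_{j}\|_{2}=1$. The induction hypothesis is that $\Omega\subseteq\mathcal{S}^{0}$ holds at the entry of step~(2) of a given iteration; this is trivially true at the start, where $\Omega=\emptyset$. For the inductive step, assume $\Omega\subseteq\mathcal{S}^{0}$ and $\mathcal{S}^{0}\setminus\Omega\neq\emptyset$ (otherwise we are already in the excluded situation) and show that the indices $\{[1],\dots,[k]\}$ appended to $\Omega$ in step~(6) all belong to $\mathcal{S}^{0}\setminus\Omega$.

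The heart of the matter is a pair of opposite estimates on the proxy $u=\Phi^{H}r$. Since $\hat x$ is the least-squares solution on $\Omega$ and $y=\Phi x^{0}+n$ with $\textrm{supp}(x^{0})=\mathcal{S}^{0}\supseteq\Omega$, the residue is the orthogonal projection of $y$ onto the complement of $\textrm{range}(\Phi_{\Omega})$, so
\begin{equation}
r=P_{\Omega}^{\perp}\left(\Phi_{\mathcal{S}^{0}\setminus\Omega}\,x^{0}_{\mathcal{S}^{0}\setminus\Omega}+n\right),\qquad P_{\Omega}^{\perp}:=I-\Phi_{\Omega}(\Phi_{\Omega}^{H}\Phi_{\Omega})^{-1}\Phi_{\Omega}^{H},\label{plan:res}
\end{equation}
and in particular $u_{j}=0$ for $j\in\Omega$. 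For $j\in\mathcal{S}^{0}\setminus\Omega$ I would isolate the diagonal term $\phi_{j}^{H}\phi_{j}x^{0}_{j}=x^{0}_{j}$ and bound the remainder — the cross-correlations $\phi_{j}^{H}\phi_{l}$ for $l\in(\mathcal{S}^{0}\setminus\Omega)\setminus\{j\}$, the projection term $\phi_{j}^{H}\Phi_{\Omega}(\Phi_{\Omega}^{H}\Phi_{\Omega})^{-1}\Phi_{\Omega}^{H}\Phi_{\mathcal{S}^{0}\setminus\Omega}x^{0}_{\mathcal{S}^{0}\setminus\Omega}$, and the noise term $\phi_{j}^{H}P_{\Omega}^{\perp}n$ — using the restricted-isometry consequences of Section~3, namely $\|\Phi_{I}^{H}\Phi_{J}v\|_{2}\le\delta_{|I|+|J|}\|v\|_{2}$ for disjoint $I,J$ and $\|(\Phi_{\Omega}^{H}\Phi_{\Omega})^{-1}\|_{2}\le(1-\delta_{|\Omega|})^{-1}$. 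All index sets occurring here have size at most $K=2S$, and the two off-diagonal contributions combine through the identity $\delta_{K}+\delta_{K}^{2}/(1-\delta_{K})=\delta_{K}/(1-\delta_{K})$; together with $\|x^{0}_{\mathcal{S}^{0}\setminus\Omega}\|_{2}\le\sqrt{K/2}\,\max_{l\in\mathcal{S}^{0}}\|x^{0}_{l}\|_{2}$ and $\|P_{\Omega}^{\perp}n\|_{2}\le\epsilon$ this yields
\begin{equation}
|u_{j}|\ \ge\ \min_{l\in\mathcal{S}^{0}}\|x^{0}_{l}\|_{2}-Q\ \ (j\in\mathcal{S}^{0}\setminus\Omega),\qquad Q:=\epsilon+\frac{\delta_{K}}{1-\delta_{K}}\sqrt{\frac{K}{2}}\,\max_{l\in\mathcal{S}^{0}}\|x^{0}_{l}\|_{2}.\label{plan:lb}
\end{equation}
Running the same estimates on (\ref{plan:res}) for an index $j\notin\mathcal{S}^{0}$ — where the diagonal term is absent — gives $|u_{j}|\le Q$. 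Hypothesis (\ref{label11}) is precisely $\min_{l}\|x^{0}_{l}\|_{2}\ge 2Q$, so the two bounds combine to the decisive separation $\min_{j\in\mathcal{S}^{0}\setminus\Omega}|u_{j}|\ge Q\ge\max_{j\notin\mathcal{S}^{0}}|u_{j}|$, strict whenever (\ref{label11}) is not tight.

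It then remains to feed this separation through steps~(3)--(6). Every coordinate of $u$ lying in $\mathcal{S}^{0}\setminus\Omega$ has magnitude at least $Q$, every other coordinate at most $Q$ (indeed exactly $0$ on $\Omega$), so the descending rearrangement $|u|_{d}$ of (\ref{label3}) places exactly the $m:=|\mathcal{S}^{0}\setminus\Omega|$ uncovered true coordinates in its first $m$ slots and shows a sharp relative drop at position $m$. Hence any $k$ returned by steps~(4)--(5) satisfies $k\le m$: the relative-gap test of step~(4) is met no later than position $m$, the threshold reduction of step~(5) only lowers $T\beta$ and so can only decrease $k$, and the cap $k=K$ is never invoked because it would require $k>K>S\ge m$. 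Therefore $\{[1],\dots,[k]\}\subseteq\mathcal{S}^{0}\setminus\Omega$, the updated set stays inside $\mathcal{S}^{0}$, and the induction closes. Since every non-terminating iteration strictly enlarges $\Omega$ within the finite set $\mathcal{S}^{0}$, after finitely many iterations either AMOP halts — whereupon $\mathcal{S}=\textrm{supp}(x^{A})\subseteq\Omega\subseteq\mathcal{S}^{0}$ — or every coordinate of $\mathcal{S}^{0}$ has been chosen, which is the excluded case. The step I expect to be delicate is pinning down the constants in (\ref{plan:lb}): one must push the decomposition (\ref{plan:res}) and the two RIP bounds through so that the off-diagonal errors really collapse to the single factor $\frac{\delta_{K}}{1-\delta_{K}}\sqrt{K/2}$ and the clean factor $2$ in (\ref{label11}) is exactly what is needed, which also forces the convention $\|\phi_{j}\|_{2}=\|P_{\Omega}^{\perp}\|_{2}=1$; a secondary subtlety is verifying that the adaptive rule of steps~(4)--(5) cannot push $k$ beyond $|\mathcal{S}^{0}\setminus\Omega|$, which is precisely where the strictness of the separation in (\ref{plan:lb}) and a moderate choice of $T$ come into play.
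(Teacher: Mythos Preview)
Your proposal is correct and follows essentially the same route as the paper: induction on iterations, the projection decomposition of the residue, the three-term splitting into noise, cross-correlation, and projection contributions, and the identical RIP bounds that collapse via $\delta_K+\delta_K^{2}/(1-\delta_K)=\delta_K/(1-\delta_K)$ to the quantity you call $Q$. Your treatment is in fact slightly more complete than the paper's, which stops at the separation inequality $\min_{j\in\mathcal{S}^{0}\setminus\mathcal{S}}|\phi_{j}^{T}r|\ge\max_{j\notin\mathcal{S}^{0}}|\phi_{j}^{T}r|$ and asserts the conclusion; you additionally argue that the adaptive selection of steps~(4)--(5) cannot overshoot $|\mathcal{S}^{0}\setminus\Omega|$, which the paper leaves implicit.
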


\begin{proof}

We proceed by induction. $\mathcal{S}{\subseteq}\mathcal{S}^0$ holds
at beginning of step 1 in AMOP initially for
$\mathcal{S}=\emptyset$. Assume it is true at beginning of step 1 in
given iteration, we prove it is still true at beginning of step 1 in
next iteration. Consider case of $\mathcal{S}{\subset}\mathcal{S}^0$
, we have
\begin{equation}
r=\Phi_{\mathcal{S}}x_{\mathcal{S}}-y,
\end{equation}
It is trivial that
\begin{equation}\label{label7}
\Phi_{\mathcal{S}}x_{\mathcal{S}}^0+
\Phi_{\mathcal{S}^0\setminus\mathcal{S}}x_{\mathcal{S}^0\setminus\mathcal{S}}^0-y^0
=\Phi_{\mathcal{S}^0}x^0_{\mathcal{S}^0}-y^0=0,
\end{equation}
because $x_{\mathcal{S}}$ is the solution of least square
optimization on step 7 in AMOP,
\begin{equation}
x_{\mathcal{S}}=(\Phi_{\mathcal{S}}^T\Phi_{\mathcal{S}})^{-1}\Phi_{\mathcal{S}}^Ty,
\end{equation}
Multiply
$(\Phi_{\mathcal{S}}^T\Phi_{\mathcal{S}})^{-1}\Phi_{\mathcal{S}}^T$
on two side of \ref{label7},
\begin{equation}
x_{\mathcal{S}}^0+
(\Phi_{\mathcal{S}}^T\Phi_{\mathcal{S}})^{-1}\Phi_{\mathcal{S}}^T
\Phi_{\mathcal{S}^0\setminus\mathcal{S}}x_{\mathcal{S}^0\setminus\mathcal{S}}^0
-(\Phi_{\mathcal{S}}^T\Phi_{\mathcal{S}})^{-1}\Phi_{\mathcal{S}}^Ty_0=0,
\end{equation}
we have
\begin{eqnarray}
r&=&\Phi_{\mathcal{S}}x_{\mathcal{S}}-y-\Phi_{\mathcal{S}}x_{\mathcal{S}}^0+
\Phi_{\mathcal{S}^0\setminus\mathcal{S}}x_{\mathcal{S}^0\setminus\mathcal{S}}^0-y^0\nonumber\\
&=&(\Phi_{\mathcal{S}}(\Phi_{\mathcal{S}}^T\Phi_{\mathcal{S}})^{-1}\Phi_{\mathcal{S}}^T-I)(y-y^0)\nonumber\\
&+&\Phi_{\mathcal{S}}(\Phi_{\mathcal{S}}^T\Phi_{\mathcal{S}})^{-1}\Phi_{\mathcal{S}}^T\Phi_{\mathcal{S}^0\setminus\mathcal{S}}x_{\mathcal{S}^0\setminus\mathcal{S}}^0\nonumber\\
&-&\Phi_{\mathcal{S}^0\setminus\mathcal{S}}x_{\mathcal{S}^0\setminus\mathcal{S}}^0
\end{eqnarray}
where $I$ is the identity matrix.

For $j\notin\mathcal{S}$, the norm of $\phi_j^Tr$ is estimated and
bounded. Firstly, because
$\Phi_{\mathcal{S}}(\Phi_{\mathcal{S}}^T\Phi_{\mathcal{S}})^{-1}\Phi_{\mathcal{S}}^T-I$
is the projection matrix of orthogonal complement of subspace
spanned by $\Phi_{\mathcal{S}}$, its 2-norm is 1. So
\begin{eqnarray}
&&\|\phi_j^T(\Phi_{\mathcal{S}}(\Phi_{\mathcal{S}}^T\Phi_{\mathcal{S}})^{-1}\Phi_{\mathcal{S}}^T-I)(y-y^0)\|_2\nonumber\\
&\leq&\|\phi_j^T\|_2\|(\Phi_{\mathcal{S}}(\Phi_{\mathcal{S}}^T\Phi_{\mathcal{S}})^{-1}\Phi_{\mathcal{S}}^T-I)\|_2\|(y-y^0)\|_2\nonumber\\
&=&1\ast1\ast\epsilon=\epsilon,
\end{eqnarray}
Secondly,
\begin{eqnarray}
&&\|\phi_j^T\Phi_{\mathcal{S}}(\Phi_{\mathcal{S}}^T\Phi_{\mathcal{S}})^{-1}\Phi_{\mathcal{S}}^T\Phi_{\mathcal{S}^0\setminus\mathcal{S}}x_{\mathcal{S}^0\setminus\mathcal{S}}^0\|_2\nonumber\\
&=&\|\phi_j^T\Phi_{\mathcal{S}}\|_2\|(\Phi_{\mathcal{S}}^T\Phi_{\mathcal{S}})^{-1}\|_2
\|\Phi_{\mathcal{S}}^T\Phi_{\mathcal{S}^0\setminus\mathcal{S}}\|_2\|x_{\mathcal{S}^0\setminus\mathcal{S}}^0\|_2,\label{label9}
\end{eqnarray}
Because $\sharp\{\mathcal{S}{\cup}\mathcal{S}^0\}\leq{2S}$ and
$K=2S$, According to \cite{zhang12}, proposition 3.2, we have
\begin{equation}
\|\Phi_{\mathcal{S}}^T\Phi_{\mathcal{S}^0\setminus\mathcal{S}}\|_2\leq\delta_K
\end{equation}
and
\begin{equation}
\|\phi_j^T\Phi_{\mathcal{S}}\|_2\leq\delta_K
\end{equation}
On the other hand, according to definition of RIC, we obtain
\begin{equation}
\frac{1}{1+\delta_K}\leq\|(\Phi_{\mathcal{S}}^T\Phi_{\mathcal{S}})^{-1}\|_2\leq\frac{1}{1-\delta_K}
\end{equation}
Hence
\begin{eqnarray}
&&\|\phi_j^T\Phi_{\mathcal{S}}(\Phi_{\mathcal{S}}^T\Phi_{\mathcal{S}})^{-1}\Phi_{\mathcal{S}}^T\Phi_{\mathcal{S}^0\setminus\mathcal{S}}x_{\mathcal{S}^0\setminus\mathcal{S}}^0\|_2\nonumber\\
&\leq&\frac{\delta_K^2}{1-\delta_K}\|x_{\mathcal{S}^0\setminus\mathcal{S}}^0\|_2,\label{label10}
\end{eqnarray}
Thirdly, by analogous deduction, if $j\notin\mathcal{S}^0$,
\begin{equation}
\|\phi_j^T\Phi_{\mathcal{S}^0\setminus\mathcal{S}}x_{\mathcal{S}^0\setminus\mathcal{S}}^0\|_2\leq\delta_K\|x_{\mathcal{S}^0\setminus\mathcal{S}}^0\|_2
\end{equation}
Summarize the results above, we have for $j\notin\mathcal{S}^0$,
\begin{eqnarray}
\|\phi_j^Tr\|_2&\leq&\epsilon+(\frac{\delta_K^2}{1-\delta_K}+\delta_K)\|x_{\mathcal{S}^0\setminus\mathcal{S}}^0\|_2,\nonumber\\
&\leq&\epsilon+\frac{\delta_K}{1-\delta_K}\|x_{\mathcal{S}^0\setminus\mathcal{S}}^0\|_2,\nonumber\\
&\leq&\epsilon+\frac{\delta_K}{1-\delta_K}\sqrt{\frac{K}{2}}\max_{j\in\mathcal{S}^0\setminus\mathcal{S}}\|x_j^0\|_2\nonumber\\
&\leq&\epsilon+\frac{\delta_K}{1-\delta_K}\sqrt{\frac{K}{2}}\max_{j\in\mathcal{S}^0}\|x_j^0\|_2,
\end{eqnarray}

Lower bound for $\|\phi_j^Tr\|_2$ is considered similarly. For
$j\in\mathcal{S}^0\setminus\mathcal{S}$,
\begin{eqnarray}
&&\|\phi_j^T\Phi_{\mathcal{S}^0\setminus\mathcal{S}}x_{\mathcal{S}^0\setminus\mathcal{S}}^0\|_2\nonumber\\
&=&\|x_{j}^0+\phi_j^T\Phi_{\mathcal{S}^0\setminus(\mathcal{S}\cup\{j\})}x_{\mathcal{S}^0\setminus(\mathcal{S}\cup\{j\})}^0\|_2,\nonumber\\
&\geq&\|x_{j}^0\|_2-\|\phi_j^T\Phi_{\mathcal{S}^0\setminus(\mathcal{S}\cup\{j\})}x_{\mathcal{S}^0\setminus(\mathcal{S}\cup\{j\})}^0\|_2\nonumber\\
&\geq&\|x_{j}^0\|_2-\delta_K\|x_{\mathcal{S}^0\setminus(\mathcal{S}\cup\{j\})}^0\|_2\nonumber\\
&\geq&\|x_{j}^0\|_2-\delta_K\|x_{\mathcal{S}^0\setminus\mathcal{S}}^0\|_2
\end{eqnarray}
Hence
\begin{eqnarray}
\|\phi_j^Tr\|_2&\geq&\|x_{j}^0\|_2-\delta_K\|x_{\mathcal{S}^0\setminus\mathcal{S}}^0\|_2
-\epsilon-\frac{\delta_K^2}{1-\delta_K}\|x_{\mathcal{S}^0\setminus\mathcal{S}}^0\|_2\nonumber\\
&=&\|x_{j}^0\|_2-\epsilon-\frac{\delta_K}{1-\delta_K}\|x_{\mathcal{S}^0\setminus\mathcal{S}}^0\|_2\nonumber\\
&\geq&\|x_{j}^0\|_2-\epsilon-\frac{\delta_K}{1-\delta_K}\sqrt{\frac{K}{2}}\max_{j\in\mathcal{S}^0}\|x_j^0\|_2,
\end{eqnarray}
if target signal $\{S^0\}$ satisfy (\ref{label11}), we have
\begin{equation}
\min_{j\in\mathcal{S}^0\setminus\mathcal{S}}\|\phi_j^Tr\|_2\geq\max_{j\notin\mathcal{S}^0}\|\phi_j^Tr\|_2,
\end{equation}
That is to say, $\mathcal{S}\subseteq\mathcal{S}^0$ will hold
throughout the iteration process of AMOP unless all the coordinates
in $\mathcal{S}^0$ were chosen.
\end{proof}

It is argued that value distribution of target signal, power of
noise and RIC of matrix $\Phi$ are all critical to performance of
greedy algorithm. The proposition above gives a general condition
for correctness of support of solutions for a large class of greedy
algorithms (Not just AMOP) which use inner product between residue
$r$ and dictionary vectors (columns of $\Phi$) to obtain information
of support of target signal. It seems that condition (\ref{label11})
is too restricted. But it is easy to see from (\ref{label11}) that
"Dynamic Scope" of target signal (that is, the norm difference
between the elements with maximal and minimal norm) depends on RIC
$\delta_K$ of matrix $\Phi$ and noise power $\epsilon$. Consider the
requirement on RIC in ROMP, which is $0.03/\sqrt{\log(s)}$ with $s$
is sparsity level of target signal according to \cite{zhang8},
Theorem 3.1, we write (\ref{label11}) as
\begin{equation}
\min_{j\in\mathcal{S}^0}\|x_j^0\|_2=\frac{0.06\sqrt{s}}{\sqrt{\log(s)}-0.03}+2\epsilon,
\end{equation}
with maximum is normalized to 1. It is depicted in Fig.\ref{fig2}
for noise level is 0.1(SNR is 20dB). When sparsity level is small,
target signals with considerable 'Dynamic Scope' are guaranteed to
have good performance in greedy algorithms. The restriction on
'Dynamic Scope' of target signal becomes tighter gently when
sparsity level increases. The actual number of chosen coordinates in
iteration step of AMOP in practical scenario is smaller than
sparsity level in general. So AMOP could choose correct coordinates
in most cases. This assertion will be illustrated further in
numerical experiments.

\begin{figure}[h]
  \centering
  \centerline{\epsfig{figure=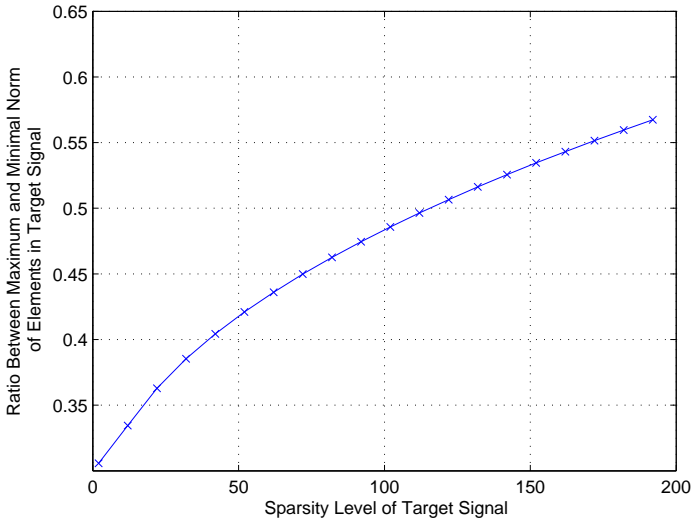,width=10cm}}
  \caption{Ratio Between Maximum and Minimum of Element Norm of Target Signal} \label{fig2}
\end{figure}

\section{Practical Consideration For Algorithm Implementation}

\subsection{Least Square via Orthogonalization}

For efficient implementation of AMOP, it is important to design a
appropriate computational scheme with low complexity and good
numerical stability to calculate the solution of least square
problem in step 7 in AMOP. It should be noticed that AMOP is
incremental, that is, in each iteration some new coordinates were
selected and none of previously chosen coordinates was excluded,
unlike CoSaMP. So it is possible to construct a recursive algorithm
to solve the least square problem.

Assume on the first iteration several coordinates were chosen and
denote the set of corresponding columns of matrix $\Phi$ as $A_1$.
Then observation vector $y$ would be linearly approximating with
vectors in $A_1$ and coefficients $\hat{x}_1$are computed by solving
the following equation
\begin{equation}
A_1^TA_1\hat{x}_1=A^Ty,
\end{equation}
ordinary solver with good numerical performance such as QR
decomposition or Singular value decomposition could be used to
compute $\hat{x}_1$. From geometrical point of view, calculation of
$\hat{x}_1$ is equivalent to project $y$ onto subspace spanned by
$A_1$. We wrote it intuitively as
\begin{equation}
\hat{x}_1=y|A_1,
\end{equation}
On the next iteration, a set $A_2$ of columns of matrix $\Phi$ with
respect to newly chosen coordinates would be added to least square
regression. The projection became
\begin{equation}
\hat{x}_2=y|\{A_1,A_2\},
\end{equation}
It is well-known that orthogonalization could simplify the
calculation for projection onto subspace spanned by two mutually
orthogonal subspace could be regarded as sum of projections on each
one. So $A_2$ was written as
\begin{equation}
A_2=A_2^1{\oplus}A_2^2,
\end{equation}
where $A_2^1$ was projection of $A_2$ onto $A_1$ and $A_2^2$ is
orthogonal to $A_1$. Hence
\begin{eqnarray}
\hat{x}_2&=&y|\{A_1,A_2\}=y|\{A_1,A_2^2\}\nonumber\\
&=&y|A_1+y|A_2^2=\hat{x}_1+y|A_2^2,\label{label12}
\end{eqnarray}
This could be accomplished with Gram-Schmidt orthogonalization
procedure. Without loss of generality, suppose
\begin{eqnarray}
A_1&=&(\phi_1,\phi_2,\cdots,\phi_k),\nonumber\\
A_2&=&(\phi_{k+1},\phi_{k+2},\cdots,\phi_n),\nonumber
\end{eqnarray}
then
\begin{equation}
A_1{\cup}A_2=\{\phi_1,\phi_2,\cdots,\phi_n\},
\end{equation}
using following procedure
\begin{eqnarray}
U_1&=&\phi_1,\nonumber\\
U_k&=&\phi_k-\sum_{m=1}^{k-1}\frac{\langle\phi_k,U_m\rangle}{\langle{U_m,U_m}\rangle}U_m,\label{label13}
\end{eqnarray}
where $\langle\bullet\rangle$ denotes the inner product of vectors
in Euclidean space. We can obtain
\begin{eqnarray}
B_1&=&(U_1,U_2,\cdots,U_k),\nonumber\\
B_2&=&(U_{k+1},U_{k+2},\cdots,U_n).\nonumber
\end{eqnarray}
The projection in (\ref{label12}) could be written as
\begin{equation}
y|A_2^2=y|B_2=y|U_{k+1}+\cdots+y|U_{n}
\end{equation}
The numerical stability of Gram-Schmidt orthogonalization procedure
could be improved further \cite{zhang16}. Instead of computing the
vector $U_k$ as (\ref{label13}), it is computed as
\begin{eqnarray}
U_k^{(1)}&=&\phi_k-\frac{\langle\phi_k,U_1\rangle}{\langle{U_1,U_1}\rangle}U_1,\nonumber\\
U_k^{(2)}&=&U_k^{(1)}-\frac{\langle{U_k^{(1)},U_2}\rangle}{\langle{U_2,U_2}\rangle}U_2,\nonumber\\
&\cdots&\nonumber\\
U_k^{(k-2)}&=&U_k^{(k-3)}-\frac{\langle{U_k^{(k-3)},U_k^{(k-2)}}\rangle}{\langle{U_k^{(k-2)},U_k^{(k-2)}}\rangle}U_k^{(k-2)},\nonumber\\
U_k&=&U_k^{(k-2)}-\frac{\langle{U_k^{(k-1)},U_k^{(k-2)}}\rangle}{\langle{U_k^{(k-1)},U_k^{(k-1)}}\rangle}U_k^{(k-1)},
\end{eqnarray}
This approach gives the same result as the original formula in exact
arithmetic, but it introduces smaller errors in finite-precision
arithmetic.

There are one points worth mentioning. Although other
orthogonalization algorithms such as Householder transformations or
Givens rotations are more stable than the stabilized Gram-Schmidt
process, they produce all the orthogonalized vectors only at the
end. On the contrary, the Gram-Schmidt procedure produces the $j$th
orthogonalized vector after the $j$th iteration and this makes it
the only choice for iterative algorithm like AMOP.

\subsection{Resource Requirements}

AMOP was designed to be a practical method for sparse signal
recovery. The main barrier for algorithm efficiency is least square
problem in step 7 of AMOP. Using recursive orthogonalization
procedure above could mitigate computational burden of AMOP
dramatically. Furthermore, the orthogonalization technique has the
additional advantage that they only interact with the matrix $\Phi$
through its action on vectors. In fact, it only concern with the
inner products and additions of columns of matrix $\Phi$. It follows
that the algorithm performs better when this sampling matrix has a
fast matrix-vector multiply, such as on parallel computational
platforms. On the other hand, less memory consumption is another
advantage of recursive orthogonalization based least square. In
fact, direct method such as SVD and QR have storage cost $O(km)$,
where $k$ is the number of chosen coordinates in each iteration and
$m$ is row number of matrix $\Phi$. It is too huge for large scale
problems. But for AMOP, only one vector need be put in memory in
recursive orthogonalization calculation and storage cost is $O(m)$.
It is more suitable for implemented with VLSI circuit.

We estimate the time complexity of main steps in AMOP as follows:
\begin{itemize}
\item[] \textbf{Step 2:} In this step, the inner products of residue
vector and columns of matrix $\Phi$ is computed as proxy for support
of target signal. Its cost is bounded by matrix-vector multiply
$\Phi^Tr$, which is $O(mN)$ with standard multiply operation or
$O(\mathcal{L})$ for fast matrix-vector multiply.

\item[] \textbf{Step 3:} According to standard textbook on
algorithms \cite{zhang17}, the expected time for selecting largest
$s$ entries in vector with dimension $N$ is $O(KN)$. Using efficient
schemes such as QuickSort or HeapSort, a fully sorting of vector
could be completed with expected time cost $O(N\log{N})$ and largest
$s$ entries could be selected directly, which is faster n some
situation.

\item[] \textbf{Step 4 \& 5:} Certain amount of support of target
signal would be identified in these two steps. Although sometimes
the threshold needs to be adjusted according to step 5 and several
cycles of operations may be necessary, the total cost is still
$O(K)$.

\item[] \textbf{Step 7:} The main advantage of AMOP is recursive
orthogonalization based implementation of least square problem.
Inner products of vectors are involved in orthogonalization and
occupy much of computational resource which can be implemented
efficiently by matrix-vector multiply. The cost is $O(mK)$ with
standard multiply operation or $O(\mathcal{L})$ for fast
matrix-vector multiply.
\end{itemize}

Table 1 summarizes the discussion above in standard multiply
operation and fast matrix-vector multiply with cost $L$,
respectively.

\begin{table}
\renewcommand{\arraystretch}{2}
\caption{Time Complexity of AMOP} \label{table1} \centering
\begin{tabular}{c||c|c}
\hline\hline
\bfseries Step & \bfseries Standard & \bfseries Fast\\
\hline\hline
2 & $O(mN)$ & $O(\mathcal{L})$ \\
\hline
3 & $O(N\log{N})$ & $O(N\log{N})$ \\
\hline
4\&5 & $O(K)$ & $O(K)$ \\
\hline
7 & $O(mK)$ & $O(K)$ \\
\hline\hline
Total & $O(mN)$ & $O(\mathcal{L})$ \\
\hline\hline
\end{tabular}
\end{table}

Storage cost for AMOP is also considered for showing its
practicability. Aside from the storage required by the sampling
matrix, AMOP algorithm constructs only one vector of length N as the
signal proxy. The sample vectors have length m, so they require
$O(m)$ storage. The signal approximations require at most $O(s)$
storage. Similarly, the index sets that appear require only $O(s)$
storage. The total storage is at worst $O(N)$.

\section{Numerical Experiment}

In this section some numerical experiments were conducted to
illustrate the performance of signal recovery of AOMP. There are
three factors to be considered in numerical testing of AMOP:
construction of $\Phi$ matrix, value distribution of target signal
and SNR condition which will be examined in our experiments.

\subsection{Construction of Matrix $\Phi$}

The property of measurement matrix $\Phi$ is critical to performance
of any greedy algorithm for sparse recovery. As indicated in section
on theoretical analysis, Its RIC has direct influence on probability
of recovery of algorithms. Here, several kinds of matrix $\Phi$ were
built and utilized to test the performance of AMOP, including
well-known Gaussian random matrix, Bernoulli random matrix and
random Fourier matrix.

The target signal was set to be flat and no noise was added in, then
500 independent trials were performed. Figure \ref{fig3}-\ref{fig5}
depicts the percentage (from the 500 trials) of sparse flat signals
that were reconstructed exactly when $m\times{N}$ Gaussian random
matrix was chosen as measurement matrix $\Phi$ . This plot was
generated with $N=256$ for various levels of sparsity $S$. The
horizontal axis represents the number of measurements $m$, and the
vertical axis represents the exact recovery percentage.

As comparison, recovery percentage of algorithm CoSaMP is also given
under the same setting. Standard CoSaMP needs sparsity of target
signal as its important prior knowledge and it is widely regarded as
one of main drawbacks of CoSaMP. In our experiment, sparsity of
target signal was given to CoSaMP as input parameter to guarantee
the power of CoSaMP to be exploited fully.

\begin{figure}[h]
  \centering
  \centerline{\epsfig{figure=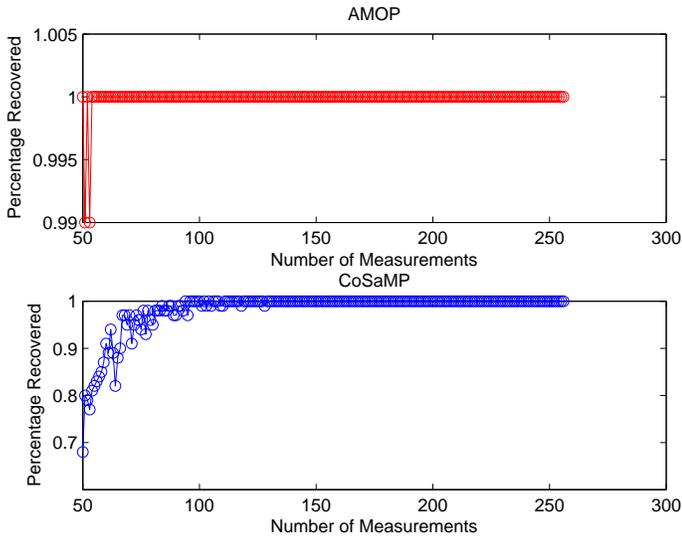,width=10cm}}
  \caption{Recovery Percentage of Signal with Sparsity 4 with Gaussian Matrix} \label{fig3}
\end{figure}

\begin{figure}[h]
  \centering
  \centerline{\epsfig{figure=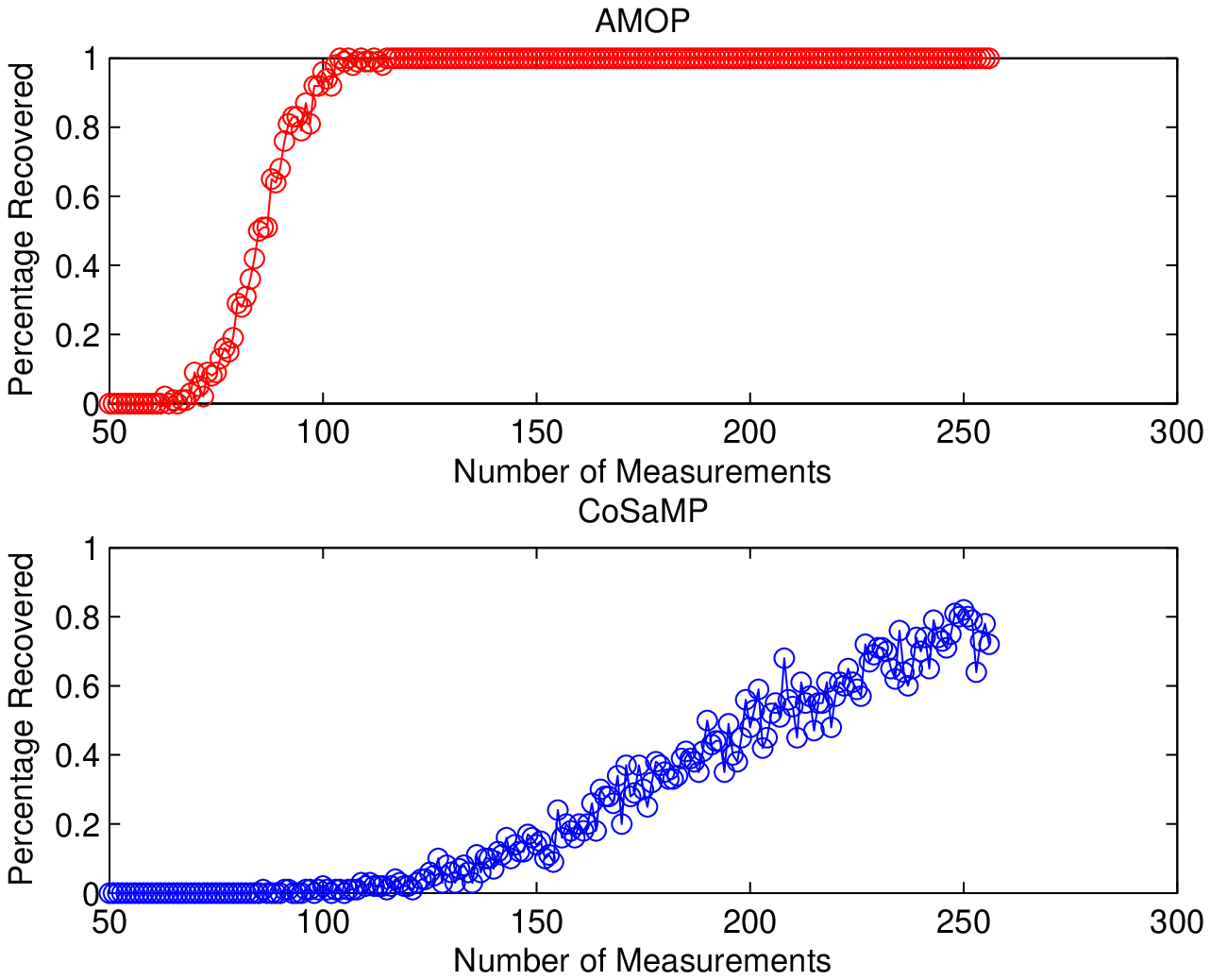,width=10cm}}
  \caption{Recovery Percentage of Signal with Sparsity 20 with Gaussian Matrix} \label{fig4}
\end{figure}

\begin{figure}[h]
  \centering
  \centerline{\epsfig{figure=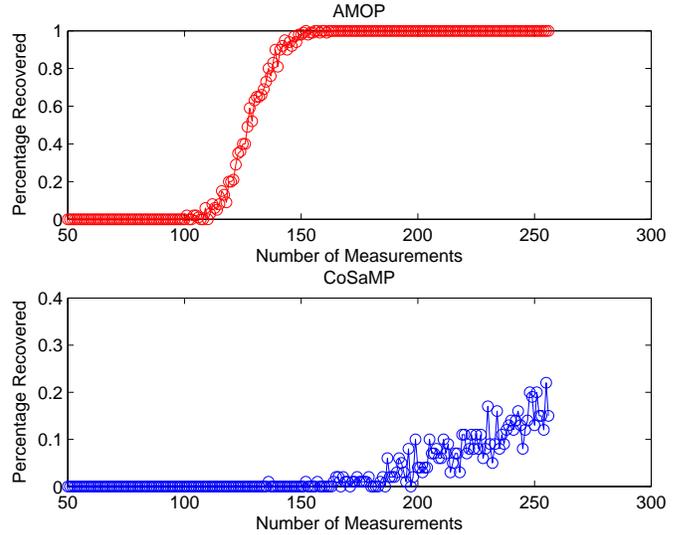,width=10cm}}
  \caption{Recovery Percentage of Signal with Sparsity 36 with Gaussian Matrix} \label{fig5}
\end{figure}

It should be noted that even the sparsity of target signal is known
beforehand (which is impossible in practice), recovery percentage of
CoSaMP is lower than that of AMOP. Especially when sparsity was
relatively large, performance of CoSaMP degenerated very rapidly. On
the contrary, the behavior of AMOP was very stable. According to
well-known theoretical result of Compressed Sensing, for Gaussian
random measurements matrix $\Phi$, if row number $m$, column number
$N$ and sparsity $S$ satisfies
\begin{equation}\label{label20}
m{\geq}CS\log(N),
\end{equation}
where $C$ is a constant independent of $S$, then the probability of
recovery failure is exponentially small \cite{zhang3}. Our
experiment result indicates that for AMOP, the value of constant $C$
is about $2$ for Gaussian measurement matrix.

\begin{figure}[h]
  \centering
  \centerline{\epsfig{figure=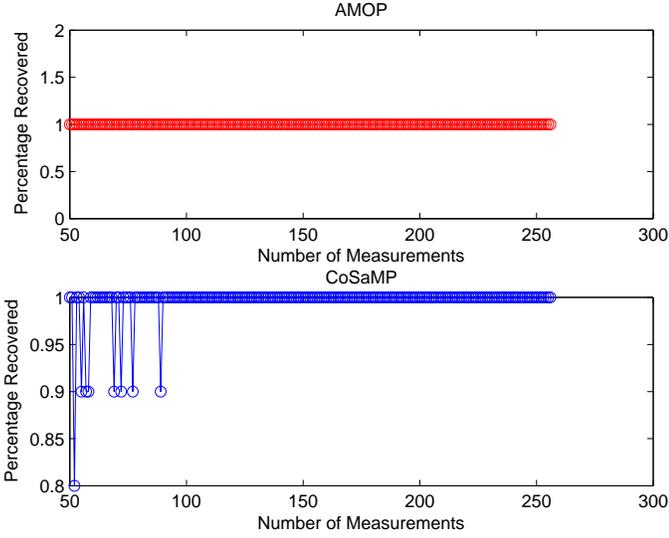,width=10cm}}
  \caption{Recovery Percentage of Signal with Sparsity 4 with Bernoulli Matrix} \label{fig6}
\end{figure}

\begin{figure}[h]
  \centering
  \centerline{\epsfig{figure=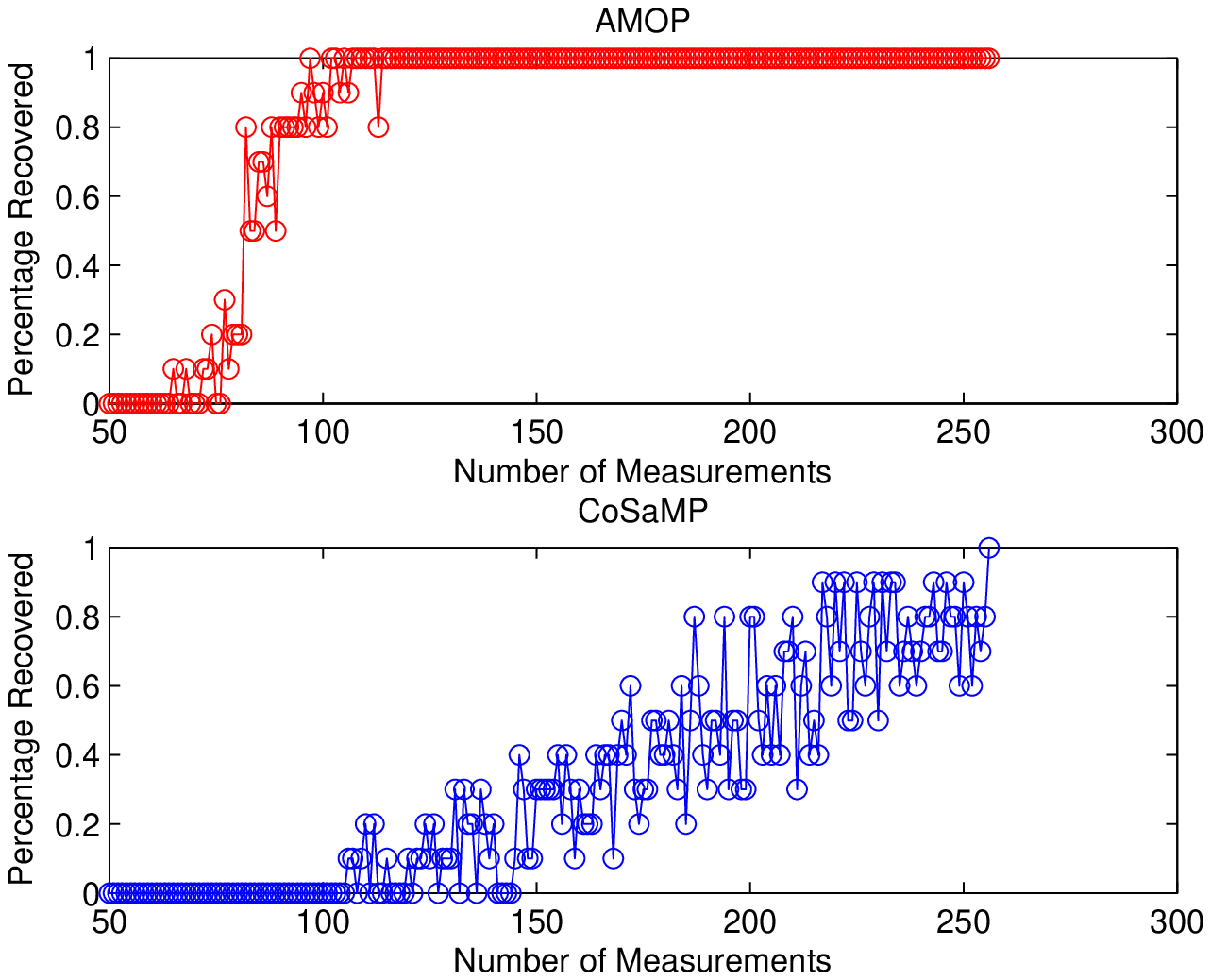,width=10cm}}
  \caption{Recovery Percentage of Signal with Sparsity 20 with Bernoulli Matrix} \label{fig7}
\end{figure}

\begin{figure}[h]
  \centering
  \centerline{\epsfig{figure=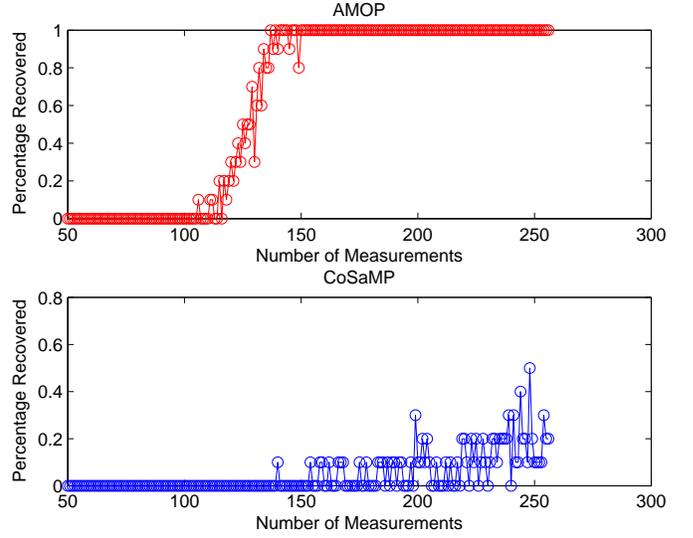,width=10cm}}
  \caption{Recovery Percentage of Signal with Sparsity 36 with Bernoulli Matrix} \label{fig8}
\end{figure}

Figure \ref{fig6}-\ref{fig8} depicts corresponding result for
Bernoulli random measurement matrix, which is analogous to Gaussian
case. It had been proved that condition (\ref{label20} is also
sufficient for overwhelming probability of successful recovery for
binary Bernoulli measurement matrix \cite{zhang18}. It is observed
that the constant $C$ for AMOP in Bernoulli case is probably the
same as that in Gaussian case.

\begin{figure}[h]
  \centering
  \centerline{\epsfig{figure=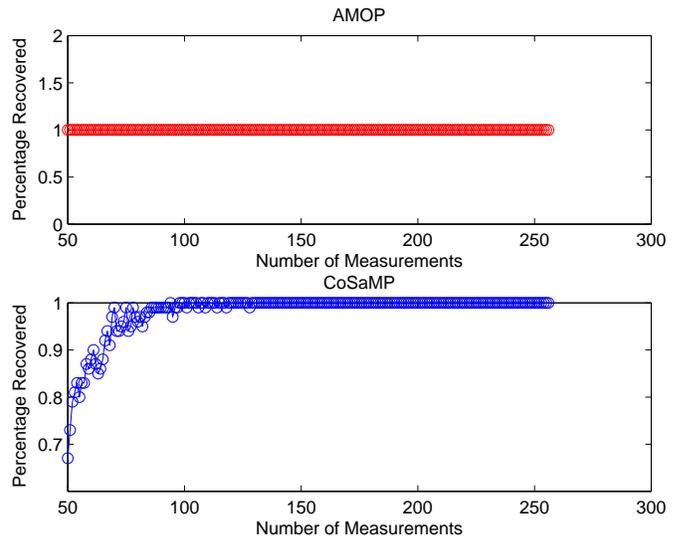,width=10cm}}
  \caption{Recovery Percentage of Signal with Sparsity 4 with Fourier Matrix} \label{fig9}
\end{figure}

\begin{figure}[h]
  \centering
  \centerline{\epsfig{figure=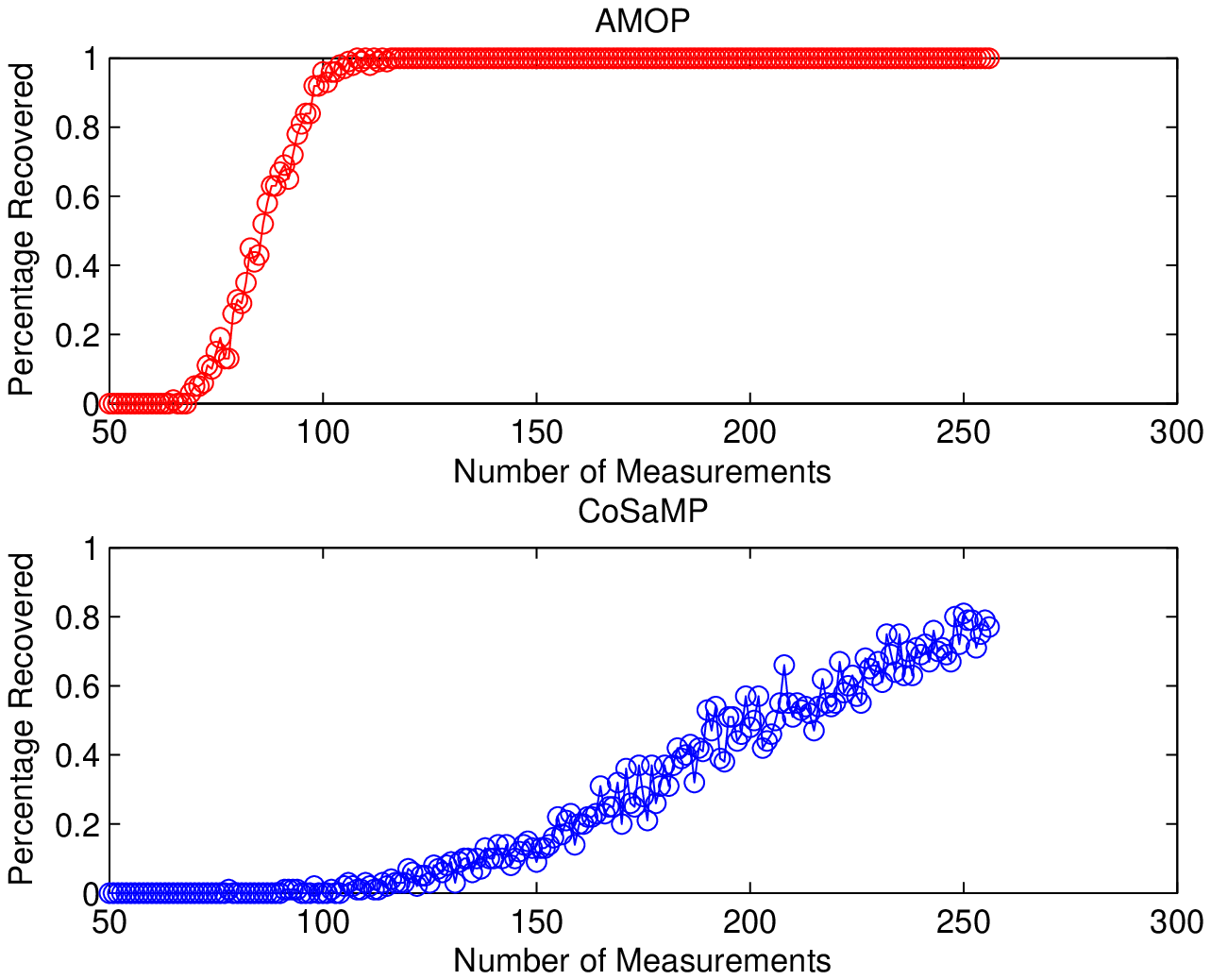,width=10cm}}
  \caption{Recovery Percentage of Signal with Sparsity 20 with Fourier Matrix} \label{fig10}
\end{figure}

\begin{figure}[h]
  \centering
  \centerline{\epsfig{figure=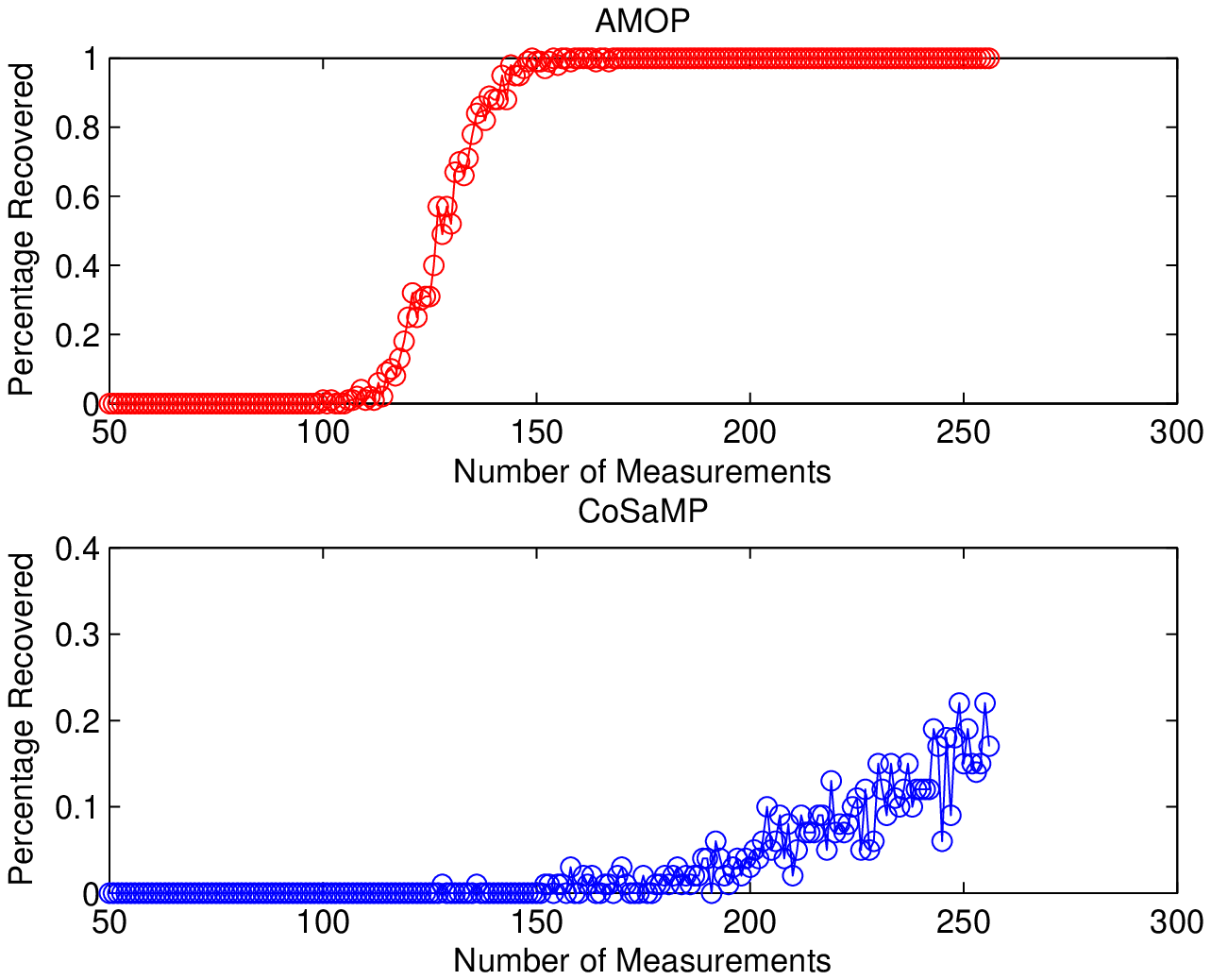,width=10cm}}
  \caption{Recovery Percentage of Signal with Sparsity 36 with Fourier Matrix} \label{fig11}
\end{figure}

Figure \ref{fig9}-\ref{fig11} depicts corresponding result for
Fourier random measurement matrix. Somewhat surprisingly, it is
similar to that of Gaussian and Bernoulli case. To our knowledge,
the best known bounds on size of measurements in Fourier case is
given by \cite{zhang19}
\begin{equation}
m{\geq}CS(\log(N))^4,
\end{equation}
which is conjectured to be the same as \ref{label20} \cite{zhang3}
but there exists no strict theoretical proof until now. Our
experiment result verified this conjecture in some extent
indirectly.

\subsection{Value Distribution of Target Signal}

Pure flat signal is rarely seen in practical engineering
application. So it is necessary to investigate the performance of
sparse recovery algorithms on non-flat target signal. There are two
cases to be studied. One is piecewise flat signal which is common in
various fields of imaging, such as optical, microwave and magnetic
resonance. The result is depicted in Figure \ref{fig12} to
\ref{fig14}. Here the measurement matrix is fixed to Gaussian random
matrix.

\begin{figure}[h]
  \centering
  \centerline{\epsfig{figure=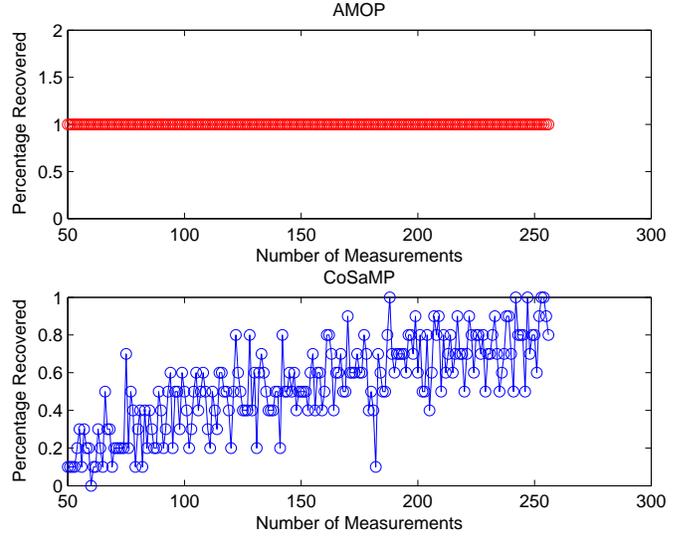,width=10cm}}
  \caption{Recovery Percentage of Piecewise Flat Signal with Sparsity 4} \label{fig12}
\end{figure}

\begin{figure}[h]
  \centering
  \centerline{\epsfig{figure=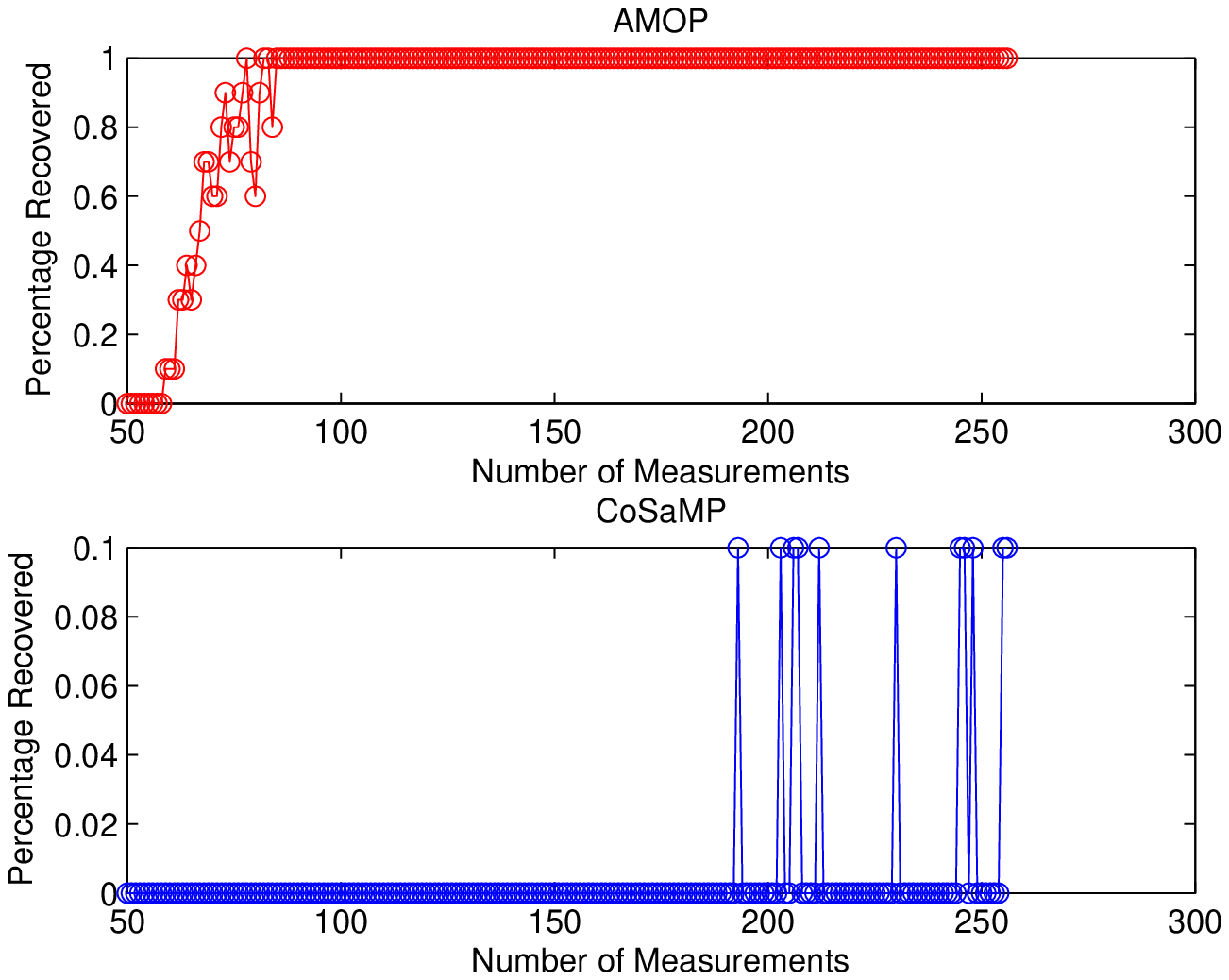,width=10cm}}
  \caption{Recovery Percentage of Piecewise Flat Signal with Sparsity 20} \label{fig13}
\end{figure}

\begin{figure}[h]
  \centering
  \centerline{\epsfig{figure=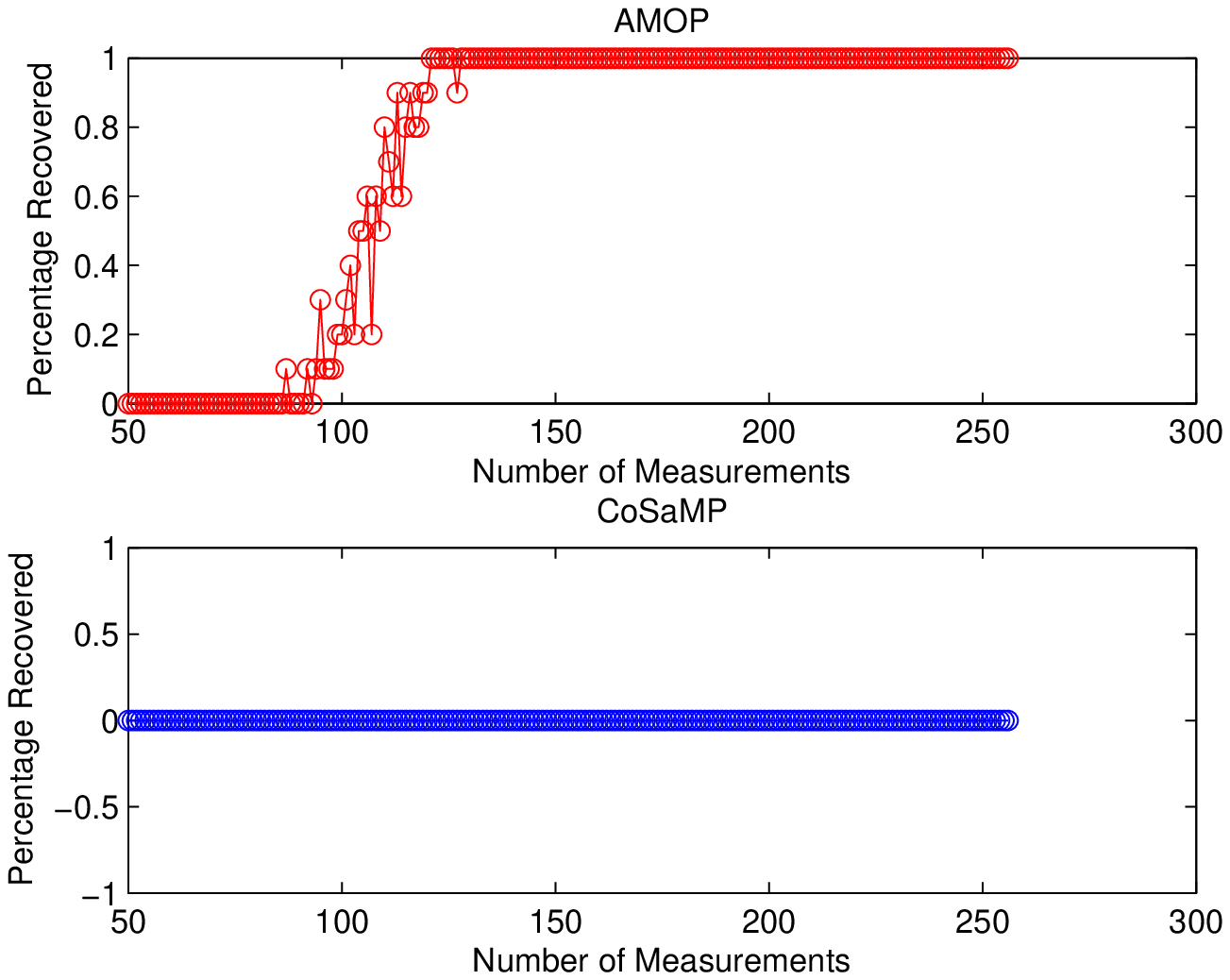,width=10cm}}
  \caption{Recovery Percentage of Piecewise Flat Signal with Sparsity 36} \label{fig14}
\end{figure}

Recovery percentage of CoSaMP in our experiment isn't ideal.
Especially when the number of non-zero elements of target signal is
large, the successful recovery probability of CoSaMP becomes more
and more ignorable. The capability of CoSaMP is doubtful in this
setting. On the contrary, the behavior of AMOP is very robust when
value distribution of target signal is changed. Furthermore, the
constant $C$ in \ref{label20} is approximately equal to that in flat
signal setting. Although it is predicted theoretically that constant
$C$ only depends on the construction of measurement matrix $\Phi$,
not rely on nature of target signal, it is common in practice that
the detailed value distribution of target signal certainly has
influence on performance of recovery algorithms. Our experimental
result indicates that the actual behavior of AMOP coincides with
theoretical conclusion perfectly. This argument is confirmed by
result for the other case of non-flat target signal. Here target
signal is chosen as compressible signal which is frequently
presented in orthogonal representation of signal, such as Discrete
Cosine Transform and wavelet transform, and data compression. Two
kinds of compressible signal are analyzed in our experiment, one is
exponential signal,
\[
x(n)= C*\alpha^n, \quad\quad{0<\alpha<1}
\]
the other is polynomial signal
\[
x(n)= C*n^{1/p}, \quad\quad{0<p<1}
\]
For brevity, only result for exponential signal is depicted in
\ref{fig15} and corresponding curve for CoSaMP is omitted. We also
performed the same experiment for polynomial compressible signals
and found the results very similar to those in Figure \ref{fig14}.

\begin{figure}[h]
  \centering
  \centerline{\epsfig{figure=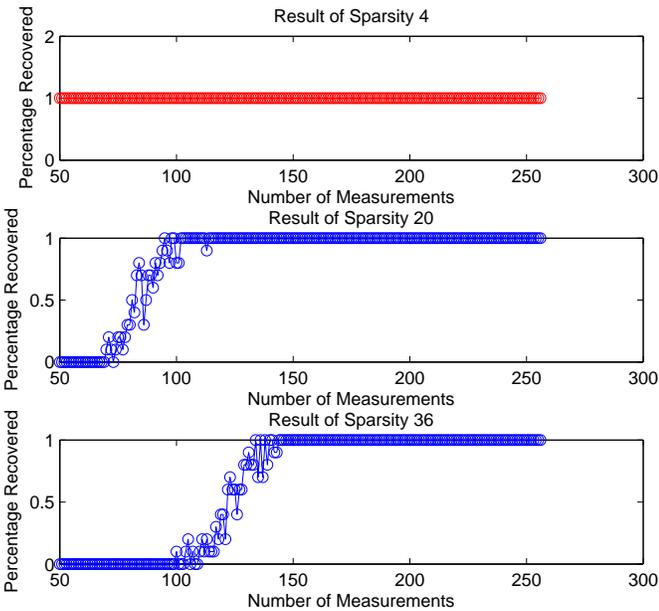,width=10cm}}
  \caption{Recovery Percentage of Polynomial Flat Signal} \label{fig15}
\end{figure}

\subsection{Target Signal With Noise}

Random noise was added in target signal to test the performance of
sparse recovery algorithms in noisy environment. Measurement matrix
is fixed to Fourier random matrix and target signal is set to flat.
The sparsity of target signal is fixed to 20 and size of
measurements $m$ is fixed to 200. The relative error of AMOP and
CoSaMP in Gaussian white noise with various level is depicted in
Figure \ref{fig16}

\begin{figure}[h]
  \centering
  \centerline{\epsfig{figure=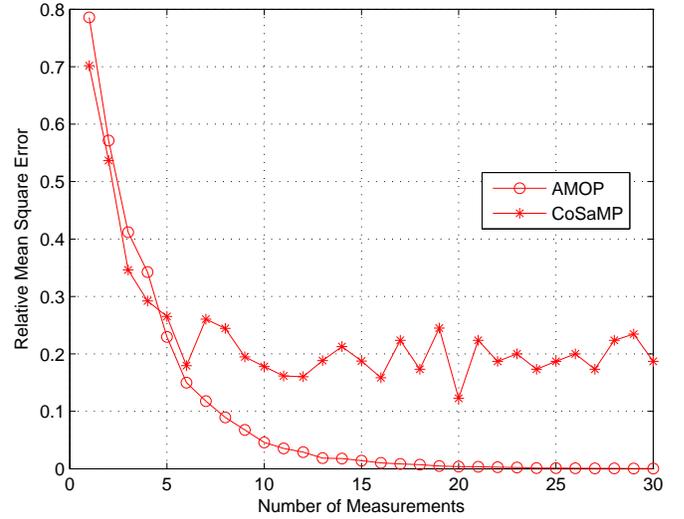,width=10cm}}
  \caption{Relative Recovery Error Under Gaussian Noise along with SNR} \label{fig16}
\end{figure}

The capacity of AMOP in noisy environment is satisfactory. Relative
recovery error could be controlled within $10\%$ when SNR is about
10dB. Even when SNR is as low as 5dB, relative error of AMOP still
could be governed within $20\%$. Though the error curve rise very
acutely in very low SNR region, it is shown that AMOP works normally
in most noisy environment. On the other hand, the relative error of
CoSaMP keeps on a high level when noise is presented. When SNR was
increased, it didn't exhibit the obvious trend of decreasing. With
the language of statistics, CoSaMP isn't consistent estimator.

\begin{figure}[h]
  \centering
  \centerline{\epsfig{figure=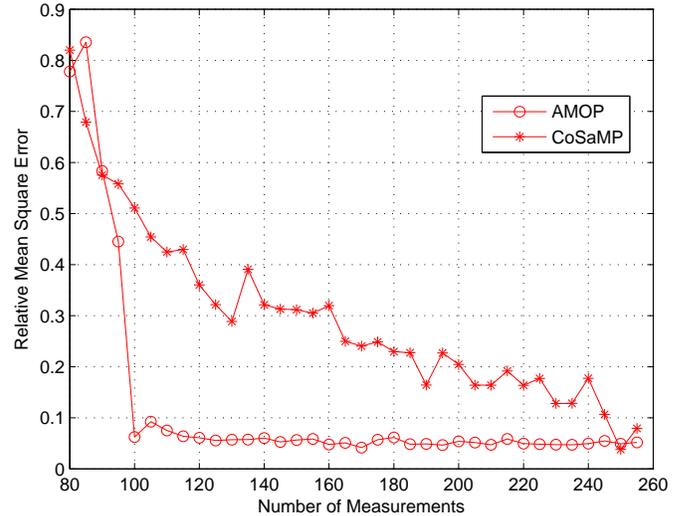,width=10cm}}
  \caption{Relative Recovery Error Under Gaussian Noise along with Measurement} \label{fig17}
\end{figure}

Figure \ref{fig17} illustrates the advantage of AMOP in noisy
environment from other viewpoint. Here SNR is fixed to 10dB, the
error with various size of measurements is plotted. It is observed
that if number of measurements $n$ is small, the performance of AMOP
is heavily abnormal. In fact, relative error of AMOP is even higher
than CoSaMP when $n$ is lower than 100. But it falls abruptly when
$m$ increases while that of CoSaMP remains in narrow range. When $m$
is larger than 100, the relative error of AMOP tends to be stable.
It is well controlled with in $10\%$, which is similar to Figure
\ref{fig16}.

\subsection{Measurement Matrix in STAP}

We would build a measurement matrix $\Phi$ with spatial-temporal
basis vectors, which is key component in theory and computation of
STAP (Space-Time Adaptive Processing), to investigate the potential
of sparse recovery algorithms to be applied in field of modern radar
and communication engineering. Here $\Phi$ is set to
\begin{equation}\label{label21}
\Phi=[\phi_{s-t}(1,1),\cdots,\phi_{s-t}(1,n),\cdots,\phi_{s-t}(m,n)],
\end{equation}
and
\begin{eqnarray}
\phi_{s-t}(f_s,f_d)&=&[1,\exp(j2{\pi}f_d),\cdots,\exp(j2{\pi}(L-1)f_d),\nonumber\\
&&\cdots,\exp(j2{\pi}((N-1)f_s+(L-1)f_d))]^T,\nonumber
\end{eqnarray}
where $f_d$ and $f_s$ denotes Doppler and spatial frequency,
respectively. Unlike Fourier matrix, the construction of
spatial-temporal matrix $\Phi$ is more complex. The phase of
elements in $\Phi$ composed of two parts, one is contributed by
Doppler frequency and the other by spatial frequency. It lead to
following consequence: Different $f_d$ and $f_s$ could be combined
to form the same (or approximately equal) phase. In other words,
strong correlation exists in different column vectors in $\Phi$,
which corresponding to various points far away with each other on
spatial-temporal plane. So Restricted Isometric Constant of $\Phi$
is conjectured to be relatively large. It is a challenge for sparse
recovery algorithm to be feasible when measurement matrix $\Phi$ is
chosen as spatial-temporal matrix.

Generally speaking, the support ("Position" of non-zero elements) of
target signal is much important than its detailed value. It is
indeed true in practical engineering discipline. For example, in
radar STAP processing, sparse recovery is utilized to estimate the
energy distribution of clutter and interference on spatial-temporal
plane from sample data directly. Because echo of clutter and
interference is much stronger than radar target, the detailed
amplitude and phase of clutter and interference isn't crucial. As
long as the accurate support ("Position") of clutter and
interference on spatial-temporal plane is found out, we can design
the efficient filter to suppress clutter and interference
effectively. Hence the most important feature of sparse recovery
algorithm applied to STAP processing is detecting the support of
target signal with great precision.

The experiment was performed to test the ability of AMOP and CoSaMP
to detect signal support. Measurement matrix $\Phi$ was set to
$224{\times}900$ spatial-temporal matrix as \ref{label21}. Target
signal is noise-free flat signal and its sparsity is fixed to 20.
Figure \ref{fig18} depicts the average result from 100 trials.

\begin{figure}[h]
  \centering
  \centerline{\epsfig{figure=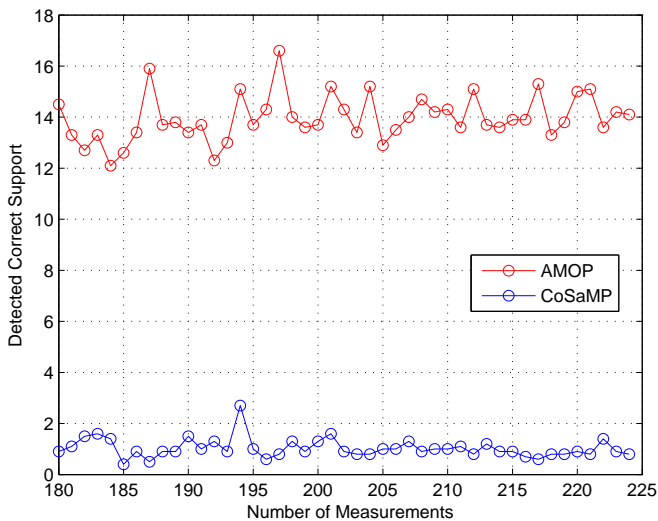,width=10cm}}
  \caption{Size of Detected Target Support With No Error} \label{fig18}
\end{figure}

It is clear that AMOP is much superior to CoSaMP when applied to
STAP processing because it could detect a majority of target support
with no error while CoSaMP could only find very few. But even so,
the accuracy of algorithms for support, whether AMOP or CoSaMP, can
hardly satisfy the requirement of practical STAP processors. Due to
ultra-low Signal Clutter Ratio (generally lower than -50dB), missing
four or five frequency points on spatial-temporal plane would lead
to very high false alarm rate and the performance of radar would
degrade heavily. So we should detect as much target support as
possible to minimize false alarm rate. If tiny error is allowed, the
behavior of sparse recovery algorithms becomes better, as depicted
in Figure \ref{fig19}

\begin{figure}[h]
  \centering
  \centerline{\epsfig{figure=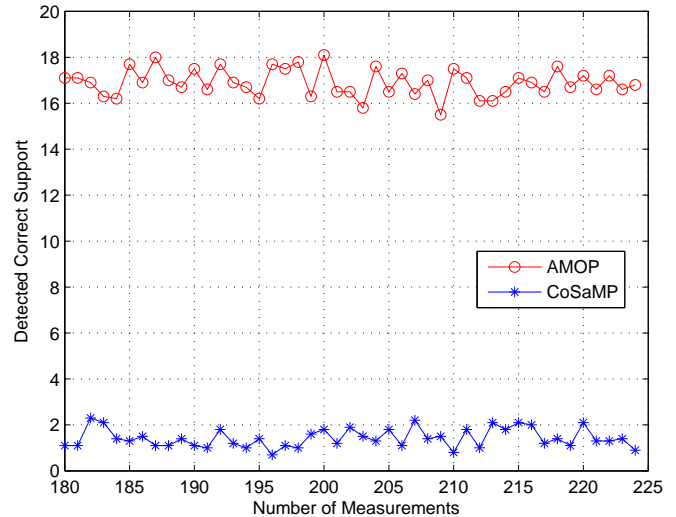,width=10cm}}
  \caption{Size of Detected Target Support With Error 1} \label{fig19}
\end{figure}

It is easily seen that if the "Position" detected having difference
1 with true "Position" of target signal is allowed to be counted,
the average size of detected target support for AMOP increases by
about 2 and that of CoSaMP is still negligible. It accounted for
that some support of target signal missed by AMOP wasn't really
missed. That is to say, their neighborhood, which corresponding to
the points adjacent to them on spatial-temporal plane, were
discovered instead. This is a good news for high performance filter
to suppress clutter could still be designed with AMOP and carefully
chosen notch, without losing much resolution and SNR. Figure
\ref{fig20} depicted the case of error 2. The behavior of AMOP
continued to be made better and approach the best. It seems that it
make little sense to enlarge error tolerance further.

\begin{figure}[h]
  \centering
  \centerline{\epsfig{figure=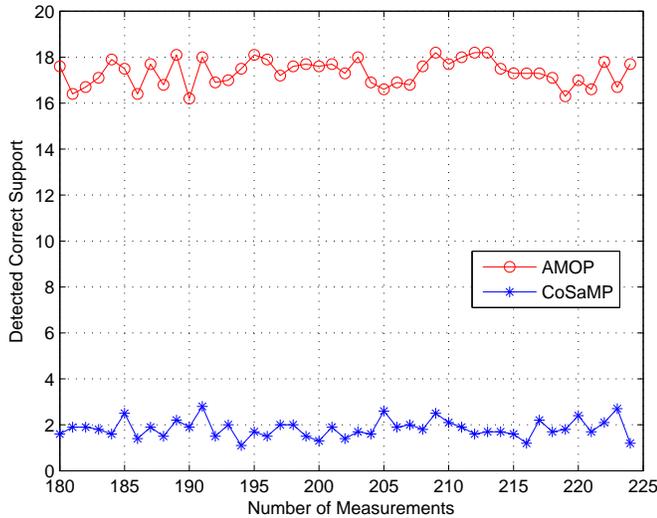,width=10cm}}
  \caption{Size of Detected Target Support With Error 2} \label{fig20}
\end{figure}

\section{Conclusion}

In this paper, a novel greedy algorithm for sparse recovery, called
AMOP, was given and examined. Its performance was studied by
theoretical analysis of simulation experiment.

The motivation of this algorithm is two obvious drawbacks in popular
methods, such as CoSaMP: Need of sparsity of target signal as prior
knowledge, and weak ability of working in noisy environment. With
well-designed algorithmic steps, AMOP can extract the information on
sparsity of target signal adaptively and sense the nature of target
signal automatically. It can recover the detail value of target
signal with very high precision with little prior knowledge. Its
validity is illustrated by strict deduction. Fine stability of
performance under random noise is another advantage of AMOP.
Furthermore, its robustness for various setting of target signal,
flat or compressible, and construction of measurement matrix, such
as spatial-temporal matrix, were also shown by thorough numerical
experiment. It is argued that AMOP is a excellent greedy algorithm
for sparse recovery and has great potential of widely utilization on
signal processing.

%
%

\end{document}